\def\BibTeX{{\rm B\kern-.05em{\sc i\kern-.025em b}\kern-.08em
    T\kern-.1667em\lower.7ex\hbox{E}\kern-.125emX}}
\newtheorem{theorem}{Theorem}
\newtheorem{corollary}{Corollary}[theorem]
\newtheorem{lemma}[theorem]{Lemma}
\theoremstyle{definition}
\newtheorem{definition}{Definition}[section]
\newtheorem*{remark}{Remark}
\newtheorem{proposition}{Proposition}[section]
\newcommand{\mcl}[1]{\mathcal{#1}}
\newcommand{\bd}{\begin{definition}}
\newcommand{\ed}{\end{definition}}
\newcommand{\bl}{\begin{lemma}}
\newcommand{\el}{\end{lemma}}
\newcommand{\bt}{\begin{theorem}}
\newcommand{\et}{\end{theorem}}
\newcommand{\bc}{\begin{corollary}}
\newcommand{\ec}{\end{corollary}}
\newcommand{\bp}{\begin{proposition}}
\newcommand{\ep}{\end{proposition}}
\newcommand{\ba}{\begin{axiom}}
\newcommand{\ea}{\end{axiom}}
\newcommand{\be}{\begin{example}}
\newcommand{\ee}{\end{example}}
\newcommand{\br}{\begin{remark}}
\newcommand{\er}{\end{remark}}
\newcommand{\bx}{\begin{exercise}}
\newcommand{\ex}{\end{exercise}}
\begin{document}

\title{A Modified MWPM Decoding Algorithm for Quantum Surface Codes Over Depolarizing Channels
}

\author{
\IEEEauthorblockN{Yaping Yuan}
\IEEEauthorblockA{\textit{Department of Electrical Engineering} \\
\textit{National Tsing Hua University}\\
Hsinchu 30013, Taiwan \\
yapingyuan@mx.nthu.edu.tw}
\and
\IEEEauthorblockN{Chung-Chin Lu}
\IEEEauthorblockA{\textit{Department of Electrical Engineering} \\
\textit{National Tsing Hua University}\\
Hsinchu 30013, Taiwan \\
cclu@ee.nthu.edu.tw}
}

\maketitle

\begin{abstract}
Quantum Surface codes are a kind of quantum topological stabilizer codes whose stabilizers and qubits are geometrically related. Due to their special structures, surface codes have great potential to lead people to large-scale quantum computation. In the minimum weight perfect matching (MWPM) decoding of surface codes, the bit-flip errors and phase-flip errors are assumed to be independent for simplicity. However, these two kinds of errors are likely to be correlated in the real world. In this paper, we propose a modification to MWPM decoding for surface codes to deal with the noise in depolarizing channels where bit-flip errors and phase-flip errors are correlated. With this modification, we obtain thresholds of 17\% and 15.3\% for the surface codes with mixed boundaries and the surface codes with a hole, respectively.
\end{abstract}

\section{Introduction}
Quantum error-correcting codes play a very important role in the development of quantum computation since the inherent sensitivity of quantum systems to noise. Stabilizer codes are a class of quantum error-correcting codes that have a strong connection with classical error-correcting codes. The code space of a stabilizer code is determined by the so-called  stabilizers. Topological codes are a class of stabilizer codes whose stabilizers and data qubits are topologically related. It is believed that topological codes have great potential to be implemented on large scales due to their special structures. Therefore, topological codes have gained a lot of attention in recent years. The surface codes are a family of topological codes defined on a 2D lattice of qubits \cite{boundary,hole}.

Various decoders for surface codes have been developed in recent years, such as the decoders based on belief-propagation (BP) \cite{BP1,BP2}, union-find (UF) \cite{UF1}, and matrix product states (MPS) \cite{MPS1,TN}. The most standard decoder for surface codes is the minimum weight perfect matching (MWPM) decoder. When the bit-flip errors and the phase-flip errors are assumed to be uncorrelated, the quantum maximum likelihood decoding (QMLD) of surface codes can be reduced to problems of finding an minimum weight perfect matching on a graph. However, the depolarizing noise model, where the bit-flip errors and the phase-flip errors are correlated, is closer to the real world. In this paper, we propose a modification to the  vanilla MWPM decoding of surface codes to deal with the noise in depolarizing channels.

Our method is based on iteratively reweighting the dual lattice and the primal lattice with the correction pattern on the other lattice. 
Similar methods were proposed in \cite{fowler2013optimal,OneRoundReweighting}, but it was not shown whether it's possible that the weight of the correction operation will increase along the iterations. In this paper, besides showing how the iteratively reweighted MWPM decoding works, we will also prove that the weight of the correction operation will never increase along the iterations. 

This paper is organized as follows. In Section II, we review the structure of surface codes and the MWPM decoder. In Section III, we discuss our modification to the MWPM decoder. In Section IV, we provide the simulation results for the IRMWPM decoding. Section V concludes this paper.

\section{Basics of Surface Codes}
\subsection{Structure of Surface Codes}
In this paper, we describe a surface code in a similar way as \cite{dual lattice} does. A surface code is defined on a square lattice and every edge on this lattice is associated with a qubit. There are two types of stabilizer generators: plaquette stabilizer generators and vertex stabilizer generators. Every plaquette is associated with a plaquette stabilizer generator. A plaquette stabilizer generator consists of a tensor product of Pauli $Z$ operators acting on the qubits that lie on the plaqeutte's boundary, as illustrated in Fig. \subref*{Stabilizer:sub1}. For every vertex, there is a vertex stabilizer generator which consists of a tensor product of Pauli $X$ operators acting on the qubits adjacent to the vertex, as shown in Fig. \subref*{Stabilizer:sub2}.

\begin{figure}[h]
\centering
\subfloat[]{
  \begin{tikzpicture}[scale=0.6, every node/.style={scale=0.7}]
	\filldraw[fill=blue!30, draw= blue!30] (0,1)rectangle(1,2);
	\draw[step=1](-0.5,-0.5) grid(3.5,3.5);
	\filldraw[fill=blue, draw= blue] (0,1.5) circle(0.2);
	\node [white] at (0,1.5) {\textbf{Z}};
	\filldraw[fill=blue, draw= blue] (1,1.5) circle(0.2);
	\node [white] at (1,1.5) {\textbf{Z}};
	\filldraw[fill=blue, draw= blue] (0.5,1) circle(0.2);
	\node [white] at (0.5,1) {\textbf{Z}};
	\filldraw[fill=blue, draw= blue] (0.5,2) circle(0.2);
	\node [white] at (0.5,2) {\textbf{Z}};
  \end{tikzpicture}
  \label{Stabilizer:sub1}}
  \qquad
\subfloat[]{
  \begin{tikzpicture}[scale=0.6, every node/.style={scale=0.7}]
	\draw[step=1](-0.5,-0.5) grid(3.5,3.5);
	\draw[red, line width = 1mm] (1.5,1) -- (2.5,1);
	\draw[red, line width = 1mm] (2,0.5) -- (2,1.5);
	
	\filldraw[fill=red, draw= red] (1.5,1) circle(0.2);
	\node [white] at (1.5,1) {\textbf{X}};
	\filldraw[fill=red, draw= red] (2.5,1) circle(0.2);
	\node [white] at (2.5,1) {\textbf{X}};
	\filldraw[fill=red, draw= red] (2,1.5) circle(0.2);
	\node [white] at (2,1.5) {\textbf{X}};
	\filldraw[fill=red, draw= red] (2,0.5) circle(0.2);
	\node [white] at (2,0.5) {\textbf{X}};
  \end{tikzpicture}
  \label{Stabilizer:sub2}}
\caption{(a) A plaquette stabilizer generator. (b) A vertex stabilizer generator.}
\label{Stabilizer}
\end{figure}
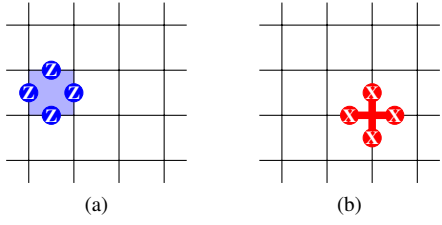

There are two main types of surface codes, one is built on a lattice with mixed boundaries \cite{boundary}, and the other is built on a lattice with holes (or defects) \cite{hole}. Surface codes with mixed boundaries are constructed on a lattice surrounded by two pairs of different boundaries, as shown in Fig. \subref*{Mixed Boundary}. Surface codes with a hole are constructed on a lattice where a plaquette stabilizer generator in the middle of the lattice is removed, as shown in Fig. \subref*{A single-cut qubit}. Note that the size of a hole is not necessarily $1\times 1$. In Fig. \ref{Surface code}, the original lattices are called the primal lattices, and the lattices depicted in dashed lines are called the dual lattices \cite{dual lattice}.

\begin{figure}[h]
\centering
\subfloat[]{
\begin{tikzpicture}[scale=0.6]
\draw[step=1,dash pattern=on 2pt off 3pt, yshift=0.5cm](0,-0.5) 				grid(3,2.5);
\draw[step=1,xshift=0.5cm](-0.5,0) grid(2.5,3);
\end{tikzpicture}
\label{Mixed Boundary}
}\qquad
\subfloat[]{
\begin{tikzpicture}[scale=0.6]
\draw[step=1,dash pattern=on 2pt off 2pt,xshift=0.5cm, yshift=0.5cm](-0.5,-0.5) 				grid(2.5,2.5);
\filldraw[fill=black!50, draw= black!50] (1,1)rectangle(2,2);
\draw[step=1](0,0) grid(3,3);
\end{tikzpicture}
\label{A single-cut qubit}
}
\caption{(a) A surface code with mixed boundaries. (b) A surface code with a hole.}
\label{Surface code}
\end{figure}
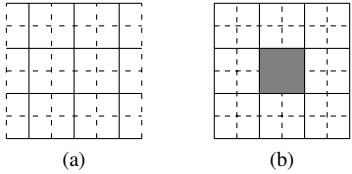

\subsection{Syndromes of Surface Codes}

For a stabilizer code, each stabilizer generator corresponds to an element of the syndrome vector. For an error $E$, the stabilizer generators that anti-commute with $E$ will give a $1$ in the syndrome vector, otherwise $0$. For simplicity, we call the stabilizer generators that give nonzero syndrome elements in a surface codes ``syndrome nodes". 

Suppose $E_Z$ is a tensor product of Pauli $Z$ errors. Since a Pauli $X$ anti-commutes with a Pauli $Z$, if we express $E_Z$ as strings on the primal lattice, then the syndrome nodes corresponding to $E_Z$ are the endpoints of those strings, as shown in Fig. \subref*{z string}. Similarly, we can express $X$-type errors as strings on the dual lattice, and the corresponding syndrome nodes are the endpoints of those strings, as shown in Fig. \subref*{string on dual lattice}. Since a Pauli $Y$ anti-commutes with both a Pauli $X$ and a Pauli $Z$, we can treat a $Y$ error as a combination of an $X$ error and a $Z$ error.

\begin{figure}[h]
\centering
\subfloat[]{
\begin{tikzpicture}[scale=0.6]
\draw[step=1](0,0) grid(3,3);
\filldraw[fill=black!50, draw= black] (1,1)rectangle(2,2);

\draw [blue, line width = 1mm] (0,2) -- (1,2);
\filldraw[fill=blue, draw= blue] (0,2)circle(0.15);
\filldraw[fill=blue, draw= blue] (1,2)circle(0.15);

\draw [blue, line width = 1mm] (2,2) -- (3,2);
\draw [blue, line width = 1mm] (3,2) -- (3,0);
\filldraw[fill=blue, draw= blue] (2,2)circle(0.15);
\filldraw[fill=blue, draw= blue] (3,0)circle(0.15);
\end{tikzpicture}
\label{z string}
}
\qquad
\subfloat[]{
\begin{tikzpicture}[scale=0.6, every node/.style={scale=0.9}]
\draw[step=1](0,0) grid(3,3);
\draw[step=1,dash pattern=on 2pt off 2pt,xshift=0.5cm, yshift=0.5cm](-0.5,-0.5) grid(2.5,2.5);
\filldraw[fill=black!50, draw= black] (1,1)rectangle(2,2);

\draw [red, line width = 1mm] (0.5,0.5) -- (0.5,2.5);
\filldraw[fill=red, draw= red] (0.5,0.5)circle(0.15);
\filldraw[fill=red, draw= red] (0.5,2.5)circle(0.15);

\draw [red, line width = 1mm] (2.5,0) -- (2.5,1.5);
\filldraw[fill=red, draw= red] (2.5,1.5)circle(0.15);

\draw [red, line width = 1mm] (1.5,2) -- (1.5,2.5);
\draw [red, line width = 1mm] (1.5,2.5) -- (2.5,2.5);
\filldraw[fill=red, draw= red] (2.5,2.5)circle(0.15);

\node [red] at (0.25,1.75) { $s_1$};
\node [red] at (1.75,2.75) { $s_2$};
\node [red] at (2.8,0.75) { $s_3$};
\end{tikzpicture}
\label{string on dual lattice}
}
\caption{(a) A tensor product of $Z$ errors is depicted in blue lines and the corresponding syndrome nodes are depicted in blue circles. (b) A tensor product of $X$ errors is depicted in red lines and the corresponding syndrome nodes are depicted in red circles. The string $s_1$ has two plaquette generators that anti-commute with it, but $s_2$ and $s_3$ both have only one plaquette operator that anti-commutes with them.}
\label{error_strings}
\end{figure}
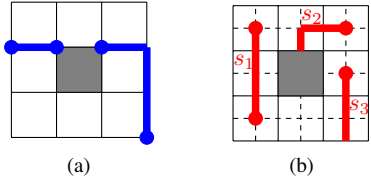

For a surface code with mixed boundaries, its code distance is the distance between the two sides. Therefore, the code distance of Fig. \subref*{Mixed Boundary} is $4$. For a surface code with a hole, operators that commute with all stabilizers but not stabilizers themselves are either loops of $Z$ operators that wind around the hole or strings of $X$ operators that connect the inner and outer boundaries. Let the number of qubits on the shortest path between the inner boundary and the outer boundary be $d_b$ and the number of qubits around the hole be $d_h$, then the code distance of a surface code with a hole is $d=\min(d_b,d_h)$. Therefore, the code distance of Fig. \subref*{A single-cut qubit} is $2$.

\subsection{MWPM decoding of Surface Codes}

Since the syndrome of a surface code can be viewed as nodes on the primal lattice and the dual lattice, the quantum maximum likelihood decoding can be reduced to the problem that finds the most likely string patterns with the same syndrome nodes. How to choose the most likely correction strings depends on the noise models. 

Suppose that $X$ errors and $Z$ errors are independent and a $Y$ error is considered as a combination of an $X$ error and a $Z$ error, then we can decode $X$ errors and $Z$ errors separately. To decode $Z$ errors only, we just need to find the strings on the primal lattice with the minimum weight such that connect all the syndrome nodes on the primal lattice. It's similar for the decoding of $X$ type errors, but the lattice we are working on is the dual lattice instead. Therefore, the decoding of a surface code can be regarded as two minimum weight perfect matching problems. Although the number of syndrome nodes may be odd, with some modifications, the decoding can still be reduced to MWPM problems. The noise model where $X$ errors and $Z$ errors are independent to each other is called the uncorrelated noise model. To solve an MWPM problem, we can use a well known algorithm developed by Jack Edmonds and known as the blossom algorithm \cite{blossom}. The time complexity of the blossom algorithm for a graph $G=(V,E)$ is $O(|V|^3)$. Therefore, the time complexity of the MWPM decoder is $O(n^3)$.

\section{Iteratively reweighted MWPM Decoding of Surface Codes}
The depolarizing noise model is the most considered noise model in quantum error-correction. In a depolarizing channel, each qubit has the probability of $1-\epsilon$ to remain untainted and the probability of $\frac{\epsilon}{3}$ to be affected by $X$, $Y$, and $Z$, respectively. Therefore, if we view a $Y$ error as a combination of $X$ and $Z$, the conditional probability $P(X|Z)=0.5$, so $X$ errors and  $Z$ errors are not independent to each other.
  
As shown in Fig. \ref{rewight example},  if we use the MWPM decoding, we will get an decoding result as Fig. \subref*{MWPM on DN}. However, if the noise model we are considering is the depolarizing noise model, the decoding result of QMLD should be Fig. \subref*{Real QMLD on DN}.

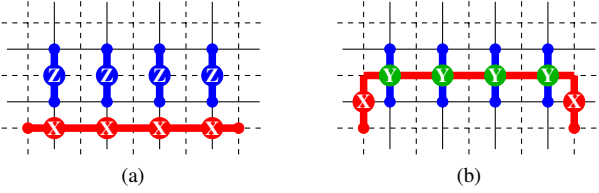
\begin{figure}[h]
     \centering
     \subfloat[]{
		\begin{tikzpicture}[scale=0.7, every node/.style={scale=0.75}]
		\draw[step=1](0.1,0.1) grid(4.9,2.9);
		\draw[step=1,dash pattern=on 2pt off 2pt,xshift=0.5cm, yshift=0.5cm](-0.5,-0.5) grid(4.5, 2.5);

		\filldraw [fill=red,draw = red](0.5,0.5) circle(0.1);
		\filldraw [fill=red,draw = red](4.5,0.5) circle(0.1);
		\foreach \x in {1,2,3,4}
			\foreach \y in {1,2}
				\filldraw [fill=blue,draw = blue](\x,\y) circle(0.1);
		\draw [red,line width=1mm] (0.5,0.5) -- (4.5,0.5);
		\foreach \x in {1,2,3,4}
			\draw [blue,line width=1mm] (\x,1) -- (\x,2);
		\foreach \x in {1,2,3,4}{
			\filldraw [fill=red,draw = red](\x,0.5) circle(0.2);
			\node [white] at (\x,0.5) {\textbf{X}};
		}
		\foreach \x in {1,2,3,4}{
			\filldraw [fill=blue,draw = blue](\x,1.5) circle(0.2);
			\node [white] at (\x,1.5) {\textbf{Z}};
		}
	\end{tikzpicture}
		\label{MWPM on DN}}
\qquad
    \subfloat[]{
		\begin{tikzpicture}[scale=0.7, every node/.style={scale=0.75}]
		\draw[step=1](0.1,0.1) grid(4.9,2.9);
		\draw[step=1,dash pattern=on 2pt off 2pt,xshift=0.5cm, yshift=0.5cm](-0.5,-0.5) grid(4.5, 2.5);

		\filldraw [fill=red,draw = red](0.5,0.5) circle(0.1);
		\filldraw [fill=red,draw = red](4.5,0.5) circle(0.1);
		\foreach \x in {1,2,3,4}
			\foreach \y in {1,2}
				\filldraw [fill=blue,draw = blue](\x,\y) circle(0.1);
	
		\draw [red,line width=1mm] (0.5,0.5) -- (0.5,1.5);
		\draw [red,line width=1mm] (0.5,1.5) -- (4.5,1.5);
		\draw [red,line width=1mm] (4.5,1.5) -- (4.5,0.5);
		\foreach \x in {1,2,3,4}
			\draw [blue,line width=1mm] (\x,1) -- (\x,2);
		
		\foreach \x in {1,2,3,4}{
			\filldraw [fill=green!70!black,draw = green!70!black](\x,1.5) circle(0.2);
			\node [white] at (\x,1.5) {\textbf{Y}};
		}
		\filldraw [fill=red,draw = red](0.5,1) circle(0.2);
		\node [white] at (0.5,1) {\textbf{X}};
		\filldraw [fill=red,draw = red](4.5,1) circle(0.2);
		\node [white] at (4.5,1) {\textbf{X}};
		\end{tikzpicture}		
		\label{Real QMLD on DN}}
    \caption{Let the small circles be the syndrome nodes. (a) is a QMLD over the uncorrelated noise model and (b) is a QMLD over the depolarizing noise model.}
    \label{rewight example}
\end{figure}

In Fig. \subref*{MWPM on DN}, we have $4$ $X$ operators and $4$ $Z$ operators, so the total weight of this correction is $8$. In Fig. \subref*{Real QMLD on DN}, although there are $6$ $X$ operators and $4$ $Z$ operators, we have $4$ $Y$ and $2$ $X$ under the view point of the depolarizing noise model. Since the total weight in Fig. \subref*{Real QMLD on DN} is only $6$, it is better than Fig. \subref*{MWPM on DN} when the noise model is the depolarizing noise model. 

Suppose that the correction strings on the primal lattice is fixed, we can find that if a string on the dual lattice touches a string on the primal lattice, the intersection does not increase the total weight, since it just turns a single $Z$ correction into a single $Y$ correction. Therefore, the shortest path from one syndrome node on the dual lattice to another is not necessarily the string that can minimize the total weight.  

However, if we reweight the edges on the dual lattice that touches the correction strings on the primal lattice to $0$, the shortest path between the two syndrome nodes on the reweighted dual lattice is the correction string that causes the least extra total weight. As shown in Fig. \ref{the shortest path}, the shortest path between the two syndrome nodes on the reweighted dual lattice is now $P_2$ instead of $P_1$. Therefore, when the correction of $Z$-type error is fixed, finding the MWPM on the reweighted dual lattice can give us the error pattern that minimizes the total weight.

\begin{figure}[h]
	\centering
	\begin{tikzpicture}[scale=0.7, every node/.style={scale=0.8}]
	\draw[black!20, step=1](0.1,0.1) grid(4.9,2.9);
	\draw[step=1,dash pattern=on 2pt off 2pt,xshift=0.5cm, yshift=0.5cm](-0.5,-0.5) grid(4.5, 2.5);

	\filldraw [fill=red,draw = red](0.5,0.5) circle(0.1);
	\filldraw [fill=red,draw = red](4.5,0.5) circle(0.1);
	\foreach \x in {1,2,3,4}
		\foreach \y in {1,2}
			\filldraw [fill=blue!25,draw = blue!15](\x,\y) circle(0.1);
	\draw [red,line width=1mm] (0.7,0.5) -- (4.3,0.5);
	\foreach \x in {1,2,3,4}{
		\draw [blue!25,line width=1mm] (\x,1) -- (\x,2);
		\node at (\x,1.75) {$0$};
		\node at (\x,0.75) {$1$};
	}
	\node at (0.2,1) {$1$};
	\node at (4.8,1) {$1$};
	\node [red] at (2,0) {$P_1$};
	\draw [red,line width=1mm] (0.5,0.7) -- (0.5,1.5);
	\draw [red,line width=1mm] (0.5,1.5) -- (4.5,1.5);
	\draw [red,line width=1mm] (4.5,1.5) -- (4.5,0.7);
	\node [red] at (4.8,1.8) {$P_2$};

	\end{tikzpicture}
	\caption{The reweighted dual lattice.}
	\label{the shortest path}
\end{figure}
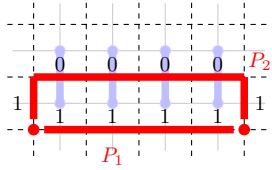

Since the $Z$ correction in a QMLD may not be an MWPM on the primal lattice, we can repeat this process more than one times to give us a better decoding result. We can use an MPWM on the reweighted dual lattice to reweight the primal lattice with the similar way, and then use the new MWPM on the reweighted primal lattice to reweight the original dual lattice again. In this paper, we will prove that no matter how many times we repeat this reweighting process, the total weight will only become smaller and smaller or remain the same.

Let $\mcl{P}_0$ be the original primal lattice and $\mcl{D}_0$ be the original dual lattice. Let $B_0$ be an MWPM on $\mcl{P}_0$ and $R_0$ be an MWPM on $\mcl{D}_0$. We reweight $\mcl{D}_0$ with $B_0$ and call it the first reweighted dual lattice $\mcl{D}_1$. Let $R_1$ be an MWPM on $\mcl{D}_1$. We reweight $\mcl{P}_0$ with $R_1$ and call it the first reweighted primal lattice $\mcl{P}_1$. We can use the similar way to construct $B_k$ and $R_k$, $k\in\mathbb{N}$.
Note that for $i>0$, $\mcl{D}_i$ is constructed by reweighting $\mcl{D}_0$ with $B_{i-1}$ and $\mcl{P}_i$ is constructed by reweighting $\mcl{P}_0$ with $R_i$.

When we have gotten $B_i$ and $R_i$ and try to calculate the total weight of them, we can not just calculate the weights of $B_i$ and $R_i$ both on the reweighted lattices and sum them up. One of them must be calculated on the original lattice and the other is calculated on the reweighted lattice reweighted with the first matching. Since $\mcl{P}_i$ is constructed based on $R_i$, to calculate the total weight of $B_i$ and $R_i$, we can calculate the weight of $R_i$ on $\mcl{D}_0$ first, and calculate that of $B_i$ on $\mcl{P}_i$, and then sum them up.

Let the weight of a matching $M$ on the $i$th reweighted lattice as $W_i(M)$. We can define the $i$th total weight as
\[T_i=W_i(B_i)+W_{0}(R_i),~i\geq 1.\]
For the case $i=0$, we need a different definition, since $W_{0}(B_0)$ is clearly not the weight of $B_0$ on the lattice reweighted with $R_0$. But since $\mcl{D}_1$ is the lattice reweighted with $B_0$, we can sum the weight of $B_0$ on $\mcl{P}_0$ and that of $R_0$ on $\mcl{D}_1$. Thus, the total weight of $B_0$ and $R_0$ is 
\[T_0 = W_0(B_0)+W_{1}(R_0).\] 
And example of the modified decoding process and examples of the above definitions can be seen in Fig. \ref{Whole decoding process}.  

\begin{figure}[h]
\centering
\begin{tikzpicture}[scale=0.5]
\node (syndrome) at (0,0.5)
{
	\begin{tikzpicture}[scale=0.6]
		\draw[step=1,dash pattern=on 2pt off 3pt,xshift=0.5cm, yshift=0.5cm](0,-0.5) 				grid(3,2.5);
		\draw[step=1,xshift=1cm](-0.5,0) grid(2.5,3);
 (1,1)rectangle(2,2);
		
		\filldraw[fill=blue, draw= blue] (1,2)circle(0.15);
		\filldraw[fill=blue, draw= blue] (2,0)circle(0.15);
		\filldraw[fill=blue, draw= blue] (2,1)circle(0.15);
		\filldraw[fill=blue, draw= blue] (3,0)circle(0.15);
		\filldraw[fill=blue, draw= blue] (3,1)circle(0.15);
		\filldraw[fill=blue, draw= blue] (3,2)circle(0.15);
		\filldraw[fill=red, draw= red]   (0.5,1.5)circle(0.15);
		\filldraw[fill=red, draw= red]   (0.5,2.5)circle(0.15);
		\filldraw[fill=red, draw= red]   (2.5,0.5)circle(0.15);
		\filldraw[fill=red, draw= red]   (2.5,2.5)circle(0.15);
		\filldraw[fill=red, draw= red]   (3.5,1.5)circle(0.15);
		\filldraw[fill=red, draw= red]   (3.5,2.5)circle(0.15);

		\node at (2,3.5) {The syndrome};
	\end{tikzpicture}
};

\node (P0) at (-3,-5){
	\begin{tikzpicture}[scale=0.6]
		\draw[step=1,dash pattern=on 2pt off 3pt,xshift=0.5cm, yshift=0.5cm](0,-0.5) 				grid(3,2.5);
		\draw[step=1,xshift=1cm](-0.5,0) grid(2.5,3);
 (1,1)rectangle(2,2);
		
		\filldraw[fill=blue, draw= blue] (1,2)circle(0.15);
		\filldraw[fill=blue, draw= blue] (2,0)circle(0.15);
		\filldraw[fill=blue, draw= blue] (2,1)circle(0.15);
		\filldraw[fill=blue, draw= blue] (3,0)circle(0.15);
		\filldraw[fill=blue, draw= blue] (3,1)circle(0.15);
		\filldraw[fill=blue, draw= blue] (3,2)circle(0.15);
		\filldraw[fill=red, draw= red]   (0.5,1.5)circle(0.15);
		\filldraw[fill=red, draw= red]   (0.5,2.5)circle(0.15);
		\filldraw[fill=red, draw= red]   (2.5,0.5)circle(0.15);
		\filldraw[fill=red, draw= red]   (2.5,2.5)circle(0.15);
		\filldraw[fill=red, draw= red]   (3.5,1.5)circle(0.15);
		\filldraw[fill=red, draw= red]   (3.5,2.5)circle(0.15);

		\draw [blue, line width = 1mm] (1,2) -- (3,2);
		\draw [blue, line width = 1mm] (2,0) -- (2,1);
		\draw [blue, line width = 1mm] (3,0) -- (3,1);
		
		\node at (2,3.4) {$(\mcl{P}_0,B_0)$};
	\end{tikzpicture}
};

\node (D0) at (3,-5){
	\begin{tikzpicture}[scale=0.6]
		\draw[step=1,dash pattern=on 2pt off 3pt,xshift=0.5cm, yshift=0.5cm](0,-0.5) 				grid(3,2.5);
		\draw[step=1,xshift=1cm](-0.5,0) grid(2.5,3);
 (1,1)rectangle(2,2);
		
		\filldraw[fill=blue, draw= blue] (1,2)circle(0.15);
		\filldraw[fill=blue, draw= blue] (2,0)circle(0.15);
		\filldraw[fill=blue, draw= blue] (2,1)circle(0.15);
		\filldraw[fill=blue, draw= blue] (3,0)circle(0.15);
		\filldraw[fill=blue, draw= blue] (3,1)circle(0.15);
		\filldraw[fill=blue, draw= blue] (3,2)circle(0.15);
		\filldraw[fill=red, draw= red]   (0.5,1.5)circle(0.15);
		\filldraw[fill=red, draw= red]   (0.5,2.5)circle(0.15);
		\filldraw[fill=red, draw= red]   (2.5,0.5)circle(0.15);
		\filldraw[fill=red, draw= red]   (2.5,2.5)circle(0.15);
		\filldraw[fill=red, draw= red]   (3.5,1.5)circle(0.15);
		\filldraw[fill=red, draw= red]   (3.5,2.5)circle(0.15);

		\draw [red, line width = 1mm] (0.5,1.5) -- (0.5,2.5);
		\draw [red, line width = 1mm] (2.5,2.5) -- (3.5,2.5);
		\draw [red, line width = 1mm] (2.5,0.5) -- (2.5,1.5);
		\draw [red, line width = 1mm] (2.5,1.5) -- (3.5,1.5);
		
		\node at (2,3.4) {$(\mcl{D}_0,R_0)$};
	\end{tikzpicture}
};

\node (T0) at (4.8,-8.8){
{$\begin{aligned}
T_0=&W_0(B_0)+W_1(R_0)\\
=&4+4=8
\end{aligned}$}
};

\node (D1) at (3,-12){
	\begin{tikzpicture}[scale=0.6]
		\draw[step=1,dash pattern=on 2pt off 3pt,xshift=0.5cm, yshift=0.5cm](0,-0.5) 				grid(3,2.5);
		\draw[step=1,xshift=1cm](-0.5,0) grid(2.5,3);
 (1,1)rectangle(2,2);
		
		\filldraw[fill=blue!30, draw= blue!30] (1,2)circle(0.15);
		\filldraw[fill=blue!30, draw= blue!30] (2,0)circle(0.15);
		\filldraw[fill=blue!30, draw= blue!30] (2,1)circle(0.15);
		\filldraw[fill=blue!30, draw= blue!30] (3,0)circle(0.15);
		\filldraw[fill=blue!30, draw= blue!30] (3,1)circle(0.15);
		\filldraw[fill=blue!30, draw= blue!30] (3,2)circle(0.15);
		\filldraw[fill=red, draw= red]   (0.5,1.5)circle(0.15);
		\filldraw[fill=red, draw= red]   (0.5,2.5)circle(0.15);
		\filldraw[fill=red, draw= red]   (2.5,0.5)circle(0.15);
		\filldraw[fill=red, draw= red]   (2.5,2.5)circle(0.15);
		\filldraw[fill=red, draw= red]   (3.5,1.5)circle(0.15);
		\filldraw[fill=red, draw= red]   (3.5,2.5)circle(0.15);

		\draw [blue!30, line width = 1mm] (1,2) -- (3,2);
		\draw [blue!30, line width = 1mm] (2,0) -- (2,1);
		\draw [blue!30, line width = 1mm] (3,0) -- (3,1);		
		
		\draw [red, line width = 1mm] (0.5,1.5) -- (0.5,2.5);
		\draw [red, line width = 1mm] (2.5,2.5) -- (3.5,2.5);
		\draw [red, line width = 1mm] (2.5,0.5) -- (3.5,0.5);
		\draw [red, line width = 1mm] (3.5,0.5) -- (3.5,1.5);
		
		\node at (2,3.4) {$(\mcl{D}_1,R_1)$};
		
	\end{tikzpicture}
};

\draw [-{Latex[length=3mm]}] (P0)--(D1);

\node (P1) at (-3,-12){
	\begin{tikzpicture}[scale=0.6]
		\draw[step=1,dash pattern=on 2pt off 3pt,xshift=0.5cm, yshift=0.5cm](0,-0.5) 				grid(3,2.5);
		\draw[step=1,xshift=1cm](-0.5,0) grid(2.5,3);
 (1,1)rectangle(2,2);
		
		\filldraw[fill=blue, draw= blue] (1,2)circle(0.15);
		\filldraw[fill=blue, draw= blue] (2,0)circle(0.15);
		\filldraw[fill=blue, draw= blue] (2,1)circle(0.15);
		\filldraw[fill=blue, draw= blue] (3,0)circle(0.15);
		\filldraw[fill=blue, draw= blue] (3,1)circle(0.15);
		\filldraw[fill=blue, draw= blue] (3,2)circle(0.15);
		\filldraw[fill=red!30, draw= red!30]   (0.5,1.5)circle(0.15);
		\filldraw[fill=red!30, draw= red!30]   (0.5,2.5)circle(0.15);
		\filldraw[fill=red!30, draw= red!30]   (2.5,0.5)circle(0.15);
		\filldraw[fill=red!30, draw= red!30]   (2.5,2.5)circle(0.15);
		\filldraw[fill=red!30, draw= red!30]   (3.5,1.5)circle(0.15);
		\filldraw[fill=red!30, draw= red!30]   (3.5,2.5)circle(0.15);		
		
		\draw [red!30, line width = 1mm] (0.5,1.5) -- (0.5,2.5);
		\draw [red!30, line width = 1mm] (2.5,2.5) -- (3.5,2.5);
		\draw [red!30, line width = 1mm] (2.5,0.5) -- (3.5,0.5);
		\draw [red!30, line width = 1mm] (3.5,0.5) -- (3.5,1.5);

		\draw [blue, line width = 1mm] (0.5,2) -- (1,2);
		\draw [blue, line width = 1mm] (3,2) -- (3.5,2);
		\draw [blue, line width = 1mm] (2,0) -- (2,1);
		\draw [blue, line width = 1mm] (3,0) -- (3,1);

		\node at (2,3.4) {$(\mcl{P}_1,B_1)$};
	\end{tikzpicture}
};

\node (T1) at (-4.8,-8.8){
{$\begin{aligned}
T_1=&W_1(B_1)+W_0(R_1)\\
=&2+4=6
\end{aligned}$}
};

\draw [-{Latex[length=3mm]}] (D1)--(P1);

\node (D2) at (3,-19){
	\begin{tikzpicture}[scale=0.6]
		\draw[step=1,dash pattern=on 2pt off 3pt,xshift=0.5cm, yshift=0.5cm](0,-0.5) 				grid(3,2.5);
		\draw[step=1,xshift=1cm](-0.5,0) grid(2.5,3);
 (1,1)rectangle(2,2);
		
		\filldraw[fill=blue!30, draw= blue!30] (1,2)circle(0.15);
		\filldraw[fill=blue!30, draw= blue!30] (2,0)circle(0.15);
		\filldraw[fill=blue!30, draw= blue!30] (2,1)circle(0.15);
		\filldraw[fill=blue!30, draw= blue!30] (3,0)circle(0.15);
		\filldraw[fill=blue!30, draw= blue!30] (3,1)circle(0.15);
		\filldraw[fill=blue!30, draw= blue!30] (3,2)circle(0.15);
		\filldraw[fill=red, draw= red]   (0.5,1.5)circle(0.15);
		\filldraw[fill=red, draw= red]   (0.5,2.5)circle(0.15);
		\filldraw[fill=red, draw= red]   (2.5,0.5)circle(0.15);
		\filldraw[fill=red, draw= red]   (2.5,2.5)circle(0.15);
		\filldraw[fill=red, draw= red]   (3.5,1.5)circle(0.15);
		\filldraw[fill=red, draw= red]   (3.5,2.5)circle(0.15);
		
		\draw [blue!30, line width = 1mm] (0.5,2) -- (1,2);
		\draw [blue!30, line width = 1mm] (3,2) -- (3.5,2);
		\draw [blue!30, line width = 1mm] (2,0) -- (2,1);
		\draw [blue!30, line width = 1mm] (3,0) -- (3,1);

		\draw [red, line width = 1mm] (0.5,1.5) -- (0.5,2.5);
		\draw [red, line width = 1mm] (2.5,0.5) -- (2.5,2.5);
		\draw [red, line width = 1mm] (3.5,1.5) -- (3.5,2.5);
		
		\node at (2,3.4) {$(\mcl{D}_2,R_2)$};
	\end{tikzpicture}
};

\draw [-{Latex[length=3mm]}] (P1)--(D2);

\node (P2) at (-3,-19){
	\begin{tikzpicture}[scale=0.6]
		\draw[step=1,dash pattern=on 2pt off 3pt,xshift=0.5cm, yshift=0.5cm](0,-0.5) 				grid(3,2.5);
		\draw[step=1,xshift=1cm](-0.5,0) grid(2.5,3);
 (1,1)rectangle(2,2);
		
		\filldraw[fill=blue, draw= blue] (1,2)circle(0.15);
		\filldraw[fill=blue, draw= blue] (2,0)circle(0.15);
		\filldraw[fill=blue, draw= blue] (2,1)circle(0.15);
		\filldraw[fill=blue, draw= blue] (3,0)circle(0.15);
		\filldraw[fill=blue, draw= blue] (3,1)circle(0.15);
		\filldraw[fill=blue, draw= blue] (3,2)circle(0.15);
		\filldraw[fill=red!30, draw= red!30]   (0.5,1.5)circle(0.15);
		\filldraw[fill=red!30, draw= red!30]   (0.5,2.5)circle(0.15);
		\filldraw[fill=red!30, draw= red!30]   (2.5,0.5)circle(0.15);
		\filldraw[fill=red!30, draw= red!30]   (2.5,2.5)circle(0.15);
		\filldraw[fill=red!30, draw= red!30]   (3.5,1.5)circle(0.15);
		\filldraw[fill=red!30, draw= red!30]   (3.5,2.5)circle(0.15);
		
		\draw [red!30, line width = 1mm] (0.5,1.5) -- (0.5,2.5);
		\draw [red!30, line width = 1mm] (2.5,0.5) -- (2.5,2.5);
		\draw [red!30, line width = 1mm] (3.5,1.5) -- (3.5,2.5);	

		\draw [blue, line width = 1mm] (0.5,2) -- (1,2);
		\draw [blue, line width = 1mm] (3,2) -- (3.5,2);
		\draw [blue, line width = 1mm] (2,0) -- (3,0);
		\draw [blue, line width = 1mm] (2,1) -- (3,1);
		
		\node at (2,3.4) {$(\mcl{P}_2,B_2)$};
	\end{tikzpicture}
};
\draw [-{Latex[length=3mm]}] (D2)--(P2);

\node (T2) at (-4.8,-15.7){
{$\begin{aligned}
T_2=&W_2(B_2)+W_0(R_2)\\
=&1+4=5
\end{aligned}$}
};

\end{tikzpicture}
\caption{An example of the modified decoding process.}
\label{Whole decoding process}
\end{figure}
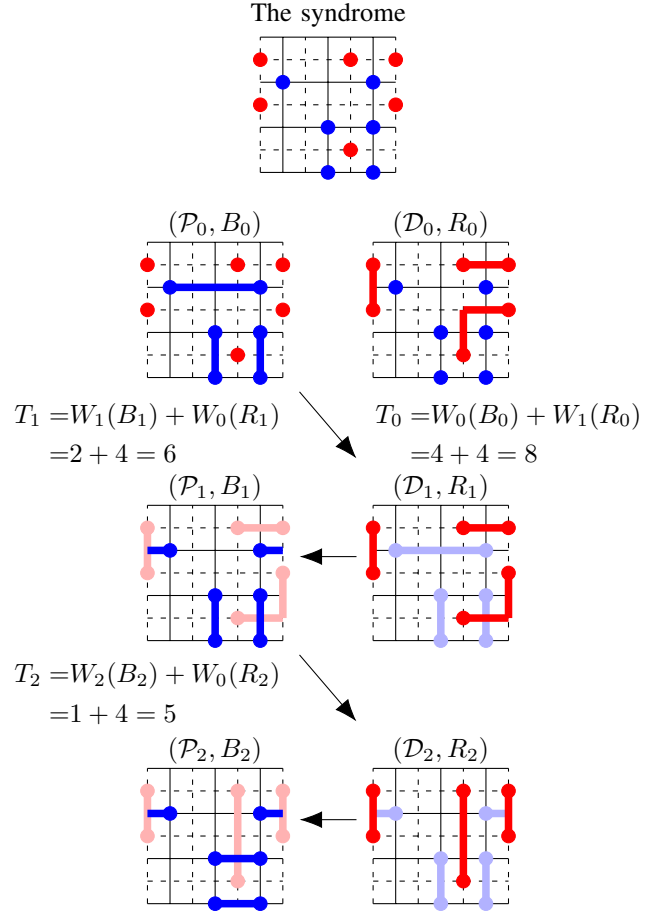

\begin{theorem}
$T_{i}\leq T_{i-1}$ for all $i\in\mathbb{N}$.
\end{theorem}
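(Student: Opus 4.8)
The plan is to recast each total weight $T_i$ as a single \emph{symmetric} quantity and then exploit the optimality of the matchings one reweighting at a time. For a primal matching $B$ let $S_B$ denote the set of qubits it acts on, and likewise $S_R$ for a dual matching $R$; write $w(B,R)=|S_B|+|S_R|-|S_B\cap S_R|$ for the true physical weight of the joint correction, where the subtracted term accounts for the fact that a qubit carrying both a $Z$ from $B$ and an $X$ from $R$ is a single $Y$ rather than two separate errors. The first step is to show that, for every $i$, $T_i=w(B_i,R_i)$. This follows from the reweighting rule: since $\mcl{D}_i$ is obtained from $\mcl{D}_0$ by setting to $0$ exactly the weights of the edges touched by $B_{i-1}$, one has $W_i(R)=|S_R|-|S_R\cap S_{B_{i-1}}|$, and symmetrically $W_i(B)=|S_B|-|S_B\cap S_{R_i}|$ on $\mcl{P}_i$. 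Plugging these into $T_i=W_i(B_i)+W_0(R_i)$ for $i\ge 1$, and into $T_0=W_0(B_0)+W_1(R_0)$ for the base case, collapses both definitions to $w(B_i,R_i)$; the slightly different shape of the $i=0$ definition is precisely what is needed to make it agree with the symmetric form.

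Once $T_i=w(B_i,R_i)$ is established, I would split the update $(B_{i-1},R_{i-1})\mapsto(B_i,R_i)$ into its two constituent reweightings and insert the intermediate pair $(B_{i-1},R_i)$, aiming for the chain
\[
T_i=w(B_i,R_i)\le w(B_{i-1},R_i)\le w(B_{i-1},R_{i-1})=T_{i-1}.
\]
The right-hand inequality is the \emph{dual} step: $R_i$ is a minimum-weight matching on $\mcl{D}_i$, while $R_{i-1}$ is also a legal perfect matching of the same dual syndrome on $\mcl{D}_i$, so $W_i(R_i)\le W_i(R_{i-1})$; adding the constant $|S_{B_{i-1}}|$ and using $w(B_{i-1},R)=|S_{B_{i-1}}|+W_i(R)$ converts this into $w(B_{i-1},R_i)\le w(B_{i-1},R_{i-1})$. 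The left-hand inequality is the \emph{primal} step: $B_i$ is a minimum-weight matching on $\mcl{P}_i$ and $B_{i-1}$ is a competing perfect matching there, so $W_i(B_i)\le W_i(B_{i-1})$; adding the constant $|S_{R_i}|$ and using $w(B,R_i)=|S_{R_i}|+W_i(B)$ yields $w(B_i,R_i)\le w(B_{i-1},R_i)$.

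The crux of both inequalities is a feasibility observation that I would state explicitly: reweighting changes only edge \emph{weights}, never the vertex set, the edge set, or the syndrome nodes to be paired, so the collection of admissible perfect matchings on $\mcl{D}_i$ (resp.\ $\mcl{P}_i$) is identical to that on $\mcl{D}_0$ (resp.\ $\mcl{P}_0$). This is what licenses using $R_{i-1}$ and $B_{i-1}$ as competitors against the new minimum-weight matchings. I expect the main obstacle to be bookkeeping rather than mathematical depth: one must keep straight which lattice each weight is measured on, verify the two reweighting identities $W_i(R)=|S_R|-|S_R\cap S_{B_{i-1}}|$ and $W_i(B)=|S_B|-|S_B\cap S_{R_i}|$ carefully, and confirm that the $i=0$ convention $T_0=W_0(B_0)+W_1(R_0)$ really does coincide with $w(B_0,R_0)$ so that the telescoping argument applies uniformly for all $i\in\mathbb{N}$.
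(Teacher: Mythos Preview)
Your proposal is correct and is essentially the paper's own argument: both insert the intermediate pair $(B_{i-1},R_i)$, apply MWPM optimality once on the dual side ($W_i(R_i)\le W_i(R_{i-1})$) and once on the primal side ($W_i(B_i)\le W_i(B_{i-1})$), and rely on the symmetry that the joint weight can be computed in either order---your explicit formula $w(B,R)=|S_B|+|S_R|-|S_B\cap S_R|$ is exactly what the paper records as Equations~(1) and~(2). The only cosmetic difference is that naming $w(B,R)$ lets you handle $i=0$ and $i\ge1$ in one stroke, whereas the paper treats those two cases separately.
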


\begin{proof}
For an MWPM $M_P$ on the primal lattice and an MWPM $M_D$ on the dual lattice, there are two ways to calculate the total weight. The first one is summing the weight of $M_P$ on the original primal lattice and the weight of $M_D$ on the dual lattice reweighted with $M_P$. The second one is the reverse, i.e., summing the weight of $M_D$ on the original dual lattice and the weight of $M_P$ on the primal lattice reweighted with $M_D$. Therefore, for $i\geq1$, we have the following properties
\begin{equation} \label{eq:1}
W_0(B_{i-1})+W_i(R_i)=W_i(B_{i-1})+W_0(R_i)
\end{equation}
\begin{equation} \label{eq:2}
W_i(B_i)+W_0(R_i)=W_0(B_i)+W_{i+1}(R_i).
\end{equation}

Since the definition of $T_i$ are different for $i=0$ and $i\geq 1$, we need to discuss two cases. For $i=1$, since $R_1$ is an MWPM on $\mcl{D}_1$, we have $W_1(R_1)\leq W_1(R_0)$, then
\[W_0(B_0)+W_1(R_1)\leq W_0(B_0)+W_1(R_0)=T_0.\]
Since $W_0(B_0)+W_1(R_1)=W_1(B_0)+W_0(R_1)$, we have
\[W_1(B_0)+W_0(R_1)\leq W_0(B_0)+W_1(R_0)=T_0.\]
Since $B_1$ is an MWPM on $\mcl{P}_1$, we have $W_1(B_1)\leq W_1(B_0)$. Therefore,
\[T_1=W_1(B_1)+W_0(R_1)\leq W_0(B_0)+W_1(R_0)=T_0.\]
For $i\geq2$, let us start from $T_{i-1}=W_{i-1}(B_{i-1})+W_0(R_{i-1})$. We have $T_{i-1}=W_0(B_{i-1})+W_i(R_{i-1})$ by Equation \ref{eq:2}. Since $R_i$ is an MWPM on $\mcl{D}_i$, we have
\[W_0(B_{i-1})+W_i(R_i) \leq W_0(B_{i-1})+W_i(R_{i-1}) = T_{i-1}.\]
By Equation \ref{eq:1}, we have
\[W_0(B_{i-1})+W_i(R_i) = W_i(B_{i-1})+W_0(R_i).\]
Similarly, since $B_i$ is an MWPM on $\mcl{P}_i$, we have $W_i(B_i)\leq W_i(B_{i-1})$. Then,
\[T_{i}=W_i(B_i)+W_0(R_i)\leq W_i(B_{i-1})+W_0(R_i) \leq T_{i-1}.\]
\end{proof}

Here we need to indicate that this modification does not guarantee the minimum total weight result. Take Fig. \ref{rewight example} as an example. If the MWPM we find on the primal lattice is as Fig. \ref{counter example of the algorithm}, then this method will fail to give the correction pattern with minimum total weight.

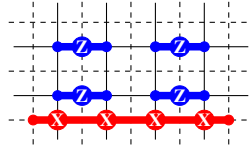
\begin{figure}[h]
     \centering
		\begin{tikzpicture}[scale=0.65, every node/.style={scale=0.7}]
		\draw[step=1](0.1,0.1) grid(4.9,2.9);
		\draw[step=1,dash pattern=on 2pt off 2pt,xshift=0.5cm, yshift=0.5cm](-0.5,-0.5) grid(4.5, 2.5);

		\filldraw [fill=red,draw = red](0.5,0.5) circle(0.1);
		\filldraw [fill=red,draw = red](4.5,0.5) circle(0.1);
		\foreach \x in {1,2,3,4}
			\foreach \y in {1,2}
				\filldraw [fill=blue,draw = blue](\x,\y) circle(0.1);
		\draw [red,line width=1mm] (0.5,0.5) -- (4.5,0.5);
		\draw [blue,line width=1mm] (1,1) -- (2,1);
		\filldraw [fill=blue,draw = blue](1.5,1) circle(0.2);
			\node [white] at (1.5,1) {\textbf{Z}};
		\draw [blue,line width=1mm] (1,2) -- (2,2);
		\filldraw [fill=blue,draw = blue](1.5,2) circle(0.2);
			\node [white] at (1.5,2) {\textbf{Z}};
		\draw [blue,line width=1mm] (3,1) -- (4,1);
		\filldraw [fill=blue,draw = blue](3.5,1) circle(0.2);
			\node [white] at (3.5,1) {\textbf{Z}};
		\draw [blue,line width=1mm] (3,2) -- (4,2);
		\filldraw [fill=blue,draw = blue](3.5,2) circle(0.2);
			\node [white] at (3.5,2) {\textbf{Z}};
		
		\foreach \x in {1,2,3,4}{
			\filldraw [fill=red,draw = red](\x,0.5) circle(0.2);
			\node [white] at (\x,0.5) {\textbf{X}};
		}
	\end{tikzpicture}
    \caption{For the syndrome in Fig. \ref{rewight example}, if $Z$-type errors are decoded as this, then this algorithm will not give us the correction pattern with minimum total weight. It's an example shows that the IRMWPM decoder doesn't guarantee the results of maximum likelihood decoding over depolarizing channel. }
    \label{counter example of the algorithm}
\end{figure}

In Section II, we do not discuss the time complexity of constructing the syndrome node graph, since the shortest path of any two syndrome nodes can be obtained in $O(1)$. However, the shortest path between two syndrome nodes on a reweighted lattice is not that clear. To find the shortest paths between all pairs of nodes in a graph, we can use Floyd-Warshall algorithm or use Dijkstra's algorithm on each pair of nodes. For a graph with $n$ nodes, the time complexity is $O(n^3)$ for both methods. Therefore, the time complexity of constructing syndrome node graphs is $O(n^3)$.

We will see in Section IV that it is rare to need more than $5$ iterations for lattices smaller than $30\times30$. Therefore, we can neglect how many iterations are used in the calculation of the total time complexity and the time complexity of the IRMWPM decoder is still $O(n^3)$.

\section{Simulation Results}

Since we will repeat the same process more that one time, we need to set a stopping criterion. Stopping the iterations as soon as the error weight stops decreasing is not good enough because it is possible that the error weight stays at a particular value for a few iterations and then drops again. Suppose that the subroutine we use to find MWPMs will give us the same results for two same complete graphs. We can use whether there is a previous correction pattern is the same as the current one as the stopping criterion. 

Here, we show the decoding performance of three different cases in Fig. \ref{three cases}. In the first one, we only apply MWPM decoding without any reweighting. In the second one, we reweight the dual lattice only one time, i.e., using $B_0$ and $R_1$ as the correction. In the third one, iterations will continue until the newest MWPM is the same as one of the previous MWPMs. We can see that the logical error rate decreases as more iterations are applied. 


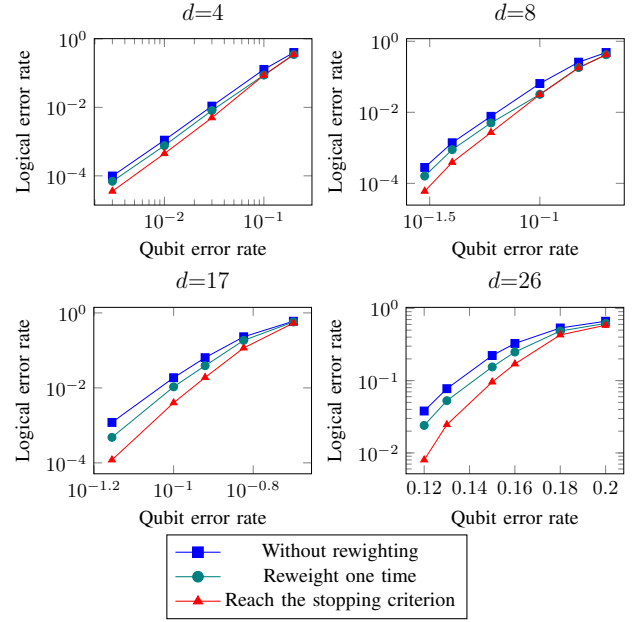
\begin{figure}[h]
     \centering
		\begin{tikzpicture}[scale= 0.75]
		\begin{axis}[title = {\large $d{=}4$},
			name=d_2,
			ymode=log,
        	xlabel=Qubit error rate,
        	ylabel=Logical error rate,
        	xmode=log,
        	width=0.3\textwidth,
            height=0.25\textwidth,
            legend style={nodes={scale=0.75, transform shape}},
        	legend entries={{Without rewighting},{Reweight one time},{Reach the stopping criterion}},
        	legend to name=named
    	]

    	\addplot [mark=square*,color=blue] plot coordinates {
        	(0.2,  39965/100000)
        	(0.10, 12820/100000)
        	(0.03, 5432/500000)
        	(0.01, 550/500000)
        	(0.003,988/10000000)
    	}; 

    	\addplot [mark=*, color=teal] plot coordinates {
        	(0.2,  34893/100000)
        	(0.10, 8747/100000)
        	(0.03, 4038/500000)
        	(0.01, 381/500000)
        	(0.003,692/10000000)
    	}; 
    	\addplot [mark=triangle*, color=red] plot coordinates {
        	(0.2,  34732/100000)
        	(0.10, 8667/100000)
        	(0.03, 2480/500000)
        	(0.01, 223/500000)
        	(0.003,356/10000000)
    	};
    	\end{axis}
		\end{tikzpicture}		
		\begin{tikzpicture}[scale= 0.75]
		\begin{axis}[title = {\large $d{=}8$},
			ymode=log,
        	xlabel=Qubit error rate,
        	ylabel=Logical error rate,
        	xmode=log,
        	width=0.3\textwidth,
          height=0.25\textwidth
    	]

    	\addplot [mark=square*,color=blue] plot coordinates {
        	(0.2,  4815/10000)
        	(0.15, 2577/10000)
        	(0.10, 646/10000)
        	(0.06, 385/50000)
        	(0.04, 25/18000)
        	(0.03, 14/50000)
    	}; 

    	\addplot [mark=*, color=teal] plot coordinates {
        	(0.2,  4174/10000)
        	(0.15, 1815/10000)
        	(0.10, 320/10000)
        	(0.06, 251/50000)
        	(0.04, 16/18000)
        	(0.03, 8/50000)

    	}; 
    	\addplot [mark=triangle*, color=red] plot coordinates {
        	(0.2,  4159/10000)
        	(0.15, 1788/10000)
        	(0.10, 315/10000)
        	(0.06, 135/50000)
        	(0.04, 7/18000)
        	(0.03, 3/50000)
    	};
    	\end{axis}
		\end{tikzpicture}

		\begin{tikzpicture}[scale= 0.75]
		\begin{axis}[title = {\large $d{=}17$},
			ymode=log,
        	xlabel=Qubit error rate,
        	ylabel=Logical error rate,
        	xmode=log,
        	width=0.3\textwidth,
            height=0.25\textwidth
    	]

    	\addplot [mark=square*,color=blue] plot coordinates {
        	(0.2,  1199/2000)
        	(0.15, 231/1000)
        	(0.12, 64/1000)
        	(0.10, 28/1500)
        	(0.07, 10/8345)
    	}; 

    	\addplot [mark=*, color=teal] plot coordinates {
        	(0.2,  1146/2000)
        	(0.15, 185/1000)
        	(0.12, 39/1000)
        	(0.10, 16/1500)
        	(0.07, 4/8345)
    	}; 
    	\addplot [mark=triangle*,color=red] plot coordinates {
        	(0.2,  1070/2000)
        	(0.15, 116/1000)
        	(0.12, 19/1000)
        	(0.10, 6/1500)
        	(0.07, 1/8345)
    	};
    	\end{axis}
		\end{tikzpicture}
		\begin{tikzpicture}[scale= 0.75]
		\begin{axis}[title = {\large $d{=}26$},
			ymode=log,
        	xlabel=Qubit error rate,
        	ylabel=Logical error rate,
        	width=0.3\textwidth,
            height=0.25\textwidth,
        	]

    	\addplot [mark=square*,color=blue] plot coordinates {
        	(0.2,  1261/1900)
        	(0.18, 455/850)
        	(0.16, 684/2090)
        	(0.15, 891/4000)
        	(0.13, 151/1950)
			(0.12, 19/500 )
    	}; 

    	\addplot [mark=*, color=teal] plot coordinates {
        	(0.2,  1176/1900)
        	(0.18, 413/850)
        	(0.16, 520/2090)
        	(0.15, 619/4000)
        	(0.13, 103/1950)
        	(0.12, 12/500 )
    	}; 
    	\addplot [mark=triangle*, color=red] plot coordinates {
        	(0.2,  1110/1900)
        	(0.18, 365/850)
        	(0.16, 356/2090)
        	(0.15, 383/4000)
        	(0.13, 48/1950)
        	(0.12, 4/500 )
    	};
    	\end{axis}
		\end{tikzpicture}
		\\
\ref{named}
    \caption{The decoding simulation of the surface codes with mixed boundaries.}
    \label{three cases}
\end{figure}

Now we want to know how many iterations do we need. Empirically, it is rarely over $5$ when the lattice is smaller than $30\times30$. The counting of the extra iterations starts from using $B_1$ to reweight the dual lattice. Using $B_0$ to reweight the dual lattice and using $R_1$ to reweight the primal lattice are not viewed as extra iterations since the stopping criterion cannot be met without $B_1$. Fig. \ref{extra iterations} shows the distribution of how many extra iterations do the surface codes with mixed boundaries need for different code distances.

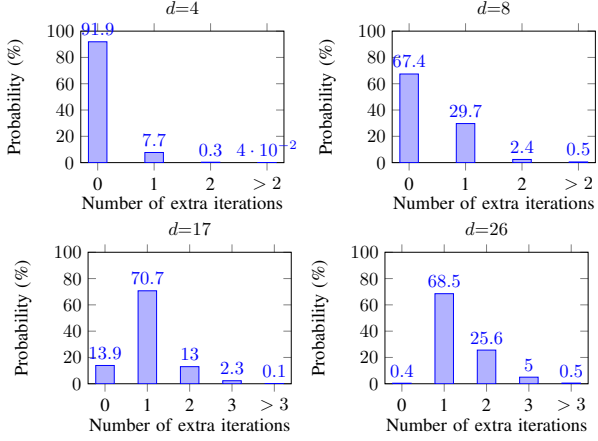
\begin{figure}[ht]
     \centering
		\begin{tikzpicture}[scale=0.7]
		\begin{axis} [ title = $d{=}4$,
			ybar,
			nodes near coords ,
			symbolic x coords={0,1,2,$>2$},
			xtick=data,
			ylabel={Probability (\%)},
			xlabel={Number of extra iterations},
			width=0.3\textwidth,
            height=0.225\textwidth,
            ymin=0,ymax=100,
		]
		\addplot coordinates {
		   (0,91.9) 
 		   (1,7.7) 
 		   (2,0.3) 
 		   ($>2$,0.04)
		};
		\end{axis}
		\end{tikzpicture}
		\begin{tikzpicture}[scale=0.7]
		\begin{axis} [ title = $d{=}8$,
			ybar,
			width=0.5\textwidth,
            height=0.5\textwidth,
			nodes near coords ,
			symbolic x coords={0,1,2,$>2$},
			xtick=data,
			ylabel={Probability (\%)},
			xlabel={Number of extra iterations},
			width=0.3\textwidth,
            height=0.225\textwidth,
            ymin=0,ymax=100,
		]
		\addplot coordinates {
		   (0,67.4) 
 		   (1,29.7) 
 		   (2,2.4) 
 		   ($>2$,0.5)
		};
		\end{axis}
		\end{tikzpicture}
		\\
		\begin{tikzpicture}[scale= 0.7]
		\begin{axis} [title = $d{=}17$,
			ybar,
			nodes near coords,
			symbolic x coords={0,1,2,3,$>3$},
			xtick=data,
			ylabel={Probability (\%)},
			xlabel={Number of extra iterations},
			width=0.3\textwidth,
            height=0.225\textwidth,
            ymin=0,ymax=100,
		]
		\addplot coordinates {
		   (0,13.9) 
 		   (1,70.7) 
 		   (2,13.0) 
 		   (3,2.3)
 		   ($>3$,0.1)
		};
		\end{axis}
		\end{tikzpicture}
		\begin{tikzpicture}[scale= 0.7]
		\begin{axis} [title = $d{=}26$,
			ybar,
			nodes near coords ,
			symbolic x coords={0,1,2,3,$>3$},
			xtick=data,
			ylabel={Probability (\%)},
			xlabel={Number of extra iterations},
			width=0.3\textwidth,
            height=0.225\textwidth,
            ymin=0,ymax=100,
		]
		\addplot coordinates {
		   (0,0.4) 
 		   (1,68.5)
 		   (2,25.6) 
 		   (3,5.0)
 		   ($>3$,0.5)
		};
		\end{axis}
		\end{tikzpicture}
    \caption{The distribution of how many extra iterations do surface codes with mixed boundaries need when the qubit error rate is $0.1$. The average of these four cases are $0.08$, $0.34$, $1.04$, and  $1.32$, respectively.}
    \label{extra iterations}
\end{figure}

For a surface code, we may want to increase the size of the lattice to lower the logical error rate, but the larger the lattice is, the more errors may be introduced into the system and the logical error rate increases. Threshold is an index of a surface code decoder. The logical error rate increases as the size of the code gets larger and larger when the qubit error rate is greater than the threshold. For the surface codes with mixed boundaries, the thresholds of the MWPM decoder is $15.5\%$ and that of the IRMWPM decoder is improved to $17.0\%$, as shown in Fig. \ref{Threshold Mixed boundaries}. Similar effects are observed for a surface code with a hole, and the thresholds of the MWPM decoder and IRMWPM decoder are  $14.2\%$ and  $15.3\%$, as shown in Fig. \ref{Threshold}. The thresholds and complexity of various decoders are provided in Table \ref{thresholds table}.

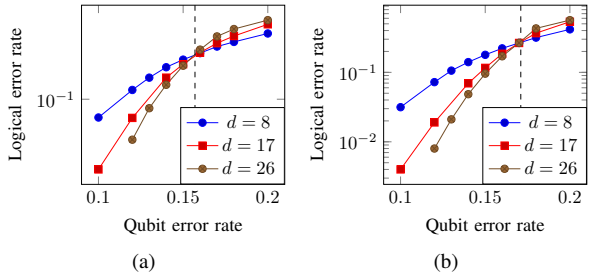
\begin{figure}[h]
     \centering
     \subfloat[]{
		\begin{tikzpicture}[scale= 0.7]
		\begin{axis}[
			ymode=log,
        	xlabel=Qubit error rate,
        	ylabel=Logical error rate,
        	width=0.3\textwidth,
            height=0.275\textwidth,
        	legend style={at={(1,0)},anchor=south east}
    	]
    	\addplot plot coordinates {
        	(0.2,     4815/10000)
        	(0.18,    3936/10000)
        	(0.17,    3523/10000)
        	(0.16,    3010/10000)
        	(0.15,    2577/10000)
        	(0.14,    2148/10000)
        	(0.13,    1675/10000)
        	(0.12,    1248/10000)
        	(0.10,    646/10000)
    	};
    	\addplot plot coordinates {
        	(0.2,     1199/2000)
        	(0.18,    453/1000)
        	(0.17,    382/1000)
        	(0.16,    305/1000)
        	(0.15,    231/1000)
        	(0.14,    335/2000)
        	(0.12,    64/1000)
        	(0.10,    28/1500)
    	};
    	d=25
    	\addplot plot coordinates {
    		(0.20,    265/400)
        	(0.18,    455/850)
        	(0.17,    1149/2560)
        	(0.16,    684/2090)
        	(0.15,    891/4000)
        	(0.14,    233/1650)
        	(0.13,    234/2900 )	
        	(0.12,    19/500)
    	};
    	\draw [dashed] ({axis cs:0.157,0}|-{rel axis cs:0,1}) -- ({axis cs:0.157,0}|-{rel axis cs:0,0});
    	\legend{$d=8$\\$d=17$\\$d=26$\\}
    	\end{axis}
		\end{tikzpicture}
	\label{Threshold Mixed boundaries MWPM}		
		}
    \subfloat[]{
		\begin{tikzpicture}[scale= 0.7]
		\begin{axis}[
			ymode=log,
        	xlabel=Qubit error rate,
        	ylabel=Logical error rate,
        	width=0.3\textwidth,
            height=0.275\textwidth,
        	legend style={at={(1,0)},anchor=south east}
    	]
    	\addplot plot coordinates {
        	(0.2,     4159/10000)
        	(0.18,    3164/10000)
        	(0.17,    2651/10000)
        	(0.16,    2215/10000)
        	(0.15,    1788/10000)
        	(0.14,    1404/10000)
        	(0.13,    1059/10000)
        	(0.12,    723/10000)
        	(0.10,    315/10000)
    	}; 
    	\addplot plot coordinates {
        	(0.2,     1070/2000)
        	(0.18,    367/1000)
        	(0.17,    265/1000 )
        	(0.16,    185/1000)
        	(0.15,    116/1000)
        	(0.14,    139/2000)
        	(0.12,    19/1000)
        	(0.10,    6/1500)
    	};
    	d=25
    	\addplot plot coordinates {
        	(0.2,     226/400)
        	(0.18,    365/850)
        	(0.17,    681/2560)
        	(0.16,    356/2090)
        	(0.15,    383/4000)
        	(0.14,    80/1650 )
        	(0.13,    20/950 )
        	(0.12,    4/500 )
    	};
		\draw [dashed] ({axis cs:0.171,0}|-{rel axis cs:0,1}) -- ({axis cs:0.171,0}|-{rel axis cs:0,0});
    	\legend{$d=8$\\$d=17$\\$d=26$\\}
    	\end{axis}
		\end{tikzpicture}
	\label{Threshold Mixed boundaries IRMWPM}	
	}
    \caption{Decoding performances of (a) MWPM decoder and (b) IRMWPM decoder on the surface codes with mixed boundaries.}
    \label{Threshold Mixed boundaries}
\end{figure}

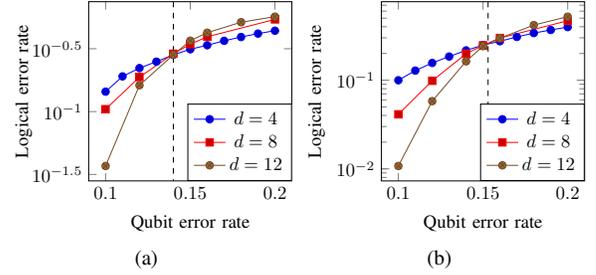
\begin{figure}[h]
     \centering
     \subfloat[]{
		\begin{tikzpicture}[scale= 0.7]
		\begin{axis}[
			ymode=log,
        	xlabel=Qubit error rate,
        	ylabel=Logical error rate,
        	width=0.3\textwidth,
            height=0.275\textwidth,
        	legend style={at={(1,0)},anchor=south east}
    	]
    	\addplot  plot coordinates {
        	(0.2,     0.44003)
        	(0.19,    0.41733)
        	(0.18,    0.39208)
        	(0.17,    0.36583)
        	(0.16,    0.33725)
        	(0.15,    0.31318)
        	(0.14,    0.28295)
        	(0.13,    0.24969)
        	(0.12,    0.22178)
        	(0.11,    0.19063)
        	(0.10,   39026/270844)
    	};
    	d=8
    	\addplot plot coordinates {
        	(0.2,    1082/2000)
        	(0.16,   3942/10000)
        	(0.15,   1718/5000)
        	(0.14,   2889/10000)
        	(0.12,   940/5000)
        	(0.10,   832/7960)
    	};
    	\addplot plot coordinates {
        	(0.2,    115/202)
        	(0.18,   257/500)
        	(0.16,   844/1989)
        	(0.15,   740/2019)
        	(0.14,   283/1000)
        	(0.12,   81/500)
        	(0.10,   24/650)
    	};
    	\draw [dashed] ({axis cs:0.140,0}|-{rel axis cs:0,1}) -- ({axis cs:0.140,0}|-{rel axis cs:0,0});
    	\legend{$d=4$\\$d=8$\\$d=12$\\}
    	\end{axis}
		\end{tikzpicture}}
    \subfloat[]{
		\begin{tikzpicture}[scale= 0.7]
		\begin{axis}[
			ymode=log,
        	xlabel=Qubit error rate,
        	ylabel=Logical error rate,
        	width=0.3\textwidth,
            height=0.275\textwidth,
        	legend style={at={(1,0)},anchor=south east}
    	]
    	\addplot plot coordinates {
        	(0.2,     0.39674)
        	(0.19,    0.36777)
        	(0.18,    0.33966)
        	(0.17,    0.30716)
        	(0.16,    0.27476)
        	(0.15,    0.2482)
        	(0.14,    0.21661)
        	(0.13,    0.18474)
        	(0.12,    0.15638)
        	(0.11,    0.1282)
        	(0.10,   27031/270844)
    	}; 
    	\addplot plot coordinates {
        	(0.2,    935/2000)
        	(0.16,   2988/10000)
        	(0.15,   1235/5000)
        	(0.14,   1985/10000)
        	(0.12,   492/5000)
        	(0.10,   328/7960)
    	}; 
    	\addplot plot coordinates {
        	(0.2,    105/202)
        	(0.18,   209/500)
        	(0.16,   591/1989)
        	(0.15,   488/2019)
        	(0.14,   163/1000)
        	(0.12,   29/500)
        	(0.10,   7/650)
    	};
		\draw [dashed] ({axis cs:0.153,0}|-{rel axis cs:0,1}) -- ({axis cs:0.153,0}|-{rel axis cs:0,0});
    	\legend{$d=4$\\$d=8$\\$d=12$\\}
    	\end{axis}
    	
		\end{tikzpicture}}
    \caption{Decoding performances of (a) MWPM decoder and (b) IRMWPM decoder on the surface codes with a hole.}
    \label{Threshold}
\end{figure}


\begin{table}[h]
\renewcommand{\arraystretch}{1.3}
\caption{The thresholds of various decoders on surface codes over depolarizing errors}
\label{thresholds table}
\centering
\begin{tabular}{|c|c|c|}
\hline
Decoder & Threshold & Complexity\\
\hline
UF \cite{UF1} & -- & $O(n)$\\
\hline
MBP \cite{BP1} & $14.5\%–16\%$ & $O(n \log \log n)$\\
\hline
MWPM & $15.5\%$ & $O(n^3)$\\
\hline
IRMWPM & $17\%$ & $O(n^3)$\\
\hline
BP-MWPM \cite{BP2} & $17.84\%$ & $O(n^3)$\\
\hline
MPS \cite{MPS1} & $17\%–18.5\%$ & $O(n\chi^3)$\\
\hline
TN \cite{TN} & $18.81\%$ & $O(n\log n+n\chi^3)$\\
\hline
\end{tabular}
\end{table}

\section{Conclusion}
We propose a modification to the conventional MWPM decoding of the surface codes to deal with the noise in depolarizing channels where the bit-flip errors and the phase-flip errors are correlated. 
Our method is mainly based on repeatedly using an MWPM on one lattice to reweight the other lattice to get a correction pattern with possibly less total weight. In this paper, we prove that the total weight will never increase when we repeat this reweighting process, and we present the simulation results of both the surface codes with mixed boundaries and the surface codes with a hole to show the improvement of the decoding performances.

\end{document}